\pgfplotsset{compat=1.17}
\colorlet{myLightGray}{white!90!black}
\colorlet{myMediumGray}{white!50!black}
\newtheorem{prop}{Proposition}
\newtheoremstyle{myremark}
{\topsep} 
{\topsep} 
{\normalfont} 
{} 
{\bfseries} 
{.} 
{5pt plus 1pt minus 1pt} 
{\thmname{#1}\thmnumber{ #2}\thmnote{ (#3)}} 
\theoremstyle{myremark}
\theoremstyle{definition}
\newcommand{\bx}{\ensuremath{\mathbf{x}}}
\newcommand{\bA}{\ensuremath{\mathbf{A}}}
\newcommand{\bB}{\ensuremath{\mathbf{B}}}
\newcommand{\bF}{\ensuremath{\mathbf{F}}}
\newcommand{\bL}{\ensuremath{\mathbf{L}}}
\newcommand{\bbR}{\ensuremath{\mathbb{R}}}
\newcommand{\cC}{\ensuremath{\mathcal{C}}}
\newcommand{\cS}{\ensuremath{\mathcal{S}}}
\newcommand{\del}{\ensuremath{\partial}}
\newcommand{\id}{\ensuremath{\mathrm{id}}}
\DeclareMathOperator*{\argmin}{arg\,min}
\newcommand{\nohyphen}{\mbox{-}\nobreak\hspace{0pt}}
\newcommand{\subheading}[1]{\smallskip\noindent\textbf{#1.} }
\begin{document}

\ninept

\title{Signal Processing on Product Spaces}

\name{T. Mitchell Roddenberry$^{\star}$,
Vincent P. Grande$^{\dagger}$,
	Florian Frantzen$^{\dagger}$,
	Michael T. Schaub$^{\dagger}$,
	Santiago Segarra$^{\star}$
        \thanks{TMR and SS were supported by USA NSF under award CCF-2008555. FF, VG and MTS acknowledge partial funding from the Ministry of Culture and Science of North Rhine-Westphalia (NRW R\"{u}ckkehrprogramm), the Excellence Strategy of the Federal Government and the L\"{a}nder, and the Deutsche Forschungsgemeinschaft (DFG, German Research Foundation) – 2236/2. }
      }

\address{$^{\star}$Department of Electrical and Computer Engineering, Rice University, USA\\
$^{\dagger}$Department of Computer Science, RWTH Aachen University, Germany}

\maketitle

\begin{abstract}
    We establish a framework for signal processing on product spaces of simplicial and cellular complexes.
    For simplicity, we focus on the product of two complexes representing time and space, although our results generalize naturally to products of simplicial complexes of arbitrary dimension.
    Our framework leverages the structure of the eigenmodes of the Hodge Laplacian of the product space to jointly filter along time and space.
    To this end, we provide a decomposition theorem of the Hodge Laplacian of the product space, which highlights how the product structure induces a decomposition of each eigenmode into a spatial and temporal component.
    Finally, we apply our method to real world data, specifically for interpolating trajectories of buoys in the ocean from a limited set of observed trajectories.
\end{abstract}

\section{Introduction}

In recent years, graph signal processing (GSP) has proven itself to be an important tool for analyzing, filtering, denoising, and interpolating signals defined on abstract domains. 
Extending the tools of ordinary signal processing to graphs opened the door for a multitude of more combinatorially inspired applications \cite{Shuman:2013, Sandryhaila:2013,  Scholkemper:2022, Perraudin:2014}.
Recently, this shift has been extended further to signal processing on topological spaces using simplicial complexes (SCs) \cite{Schaub:2021, Schaub:2022, Barbarossa:2020, Barbarossa:2018, Yang:2021}, cellular complexes (CCs) \cite{Roddenberry:2021, Robinson:2014, Sardellitti:2021}, or cellular sheaves \cite{Bodnar:2022}.
This move to topological signal processing allows both for a more general geometry of the underlying domain to be captured, as well as for additional constraints on the filtering to be taken into account. 
Example applications where this topological viewpoint has been used 
include wireless traffic networks \cite{Barbarossa:2020} or outlier detection on whales traveling in the Canadian Arctic Archipelago \cite{Frantzen:2021}.

Sometimes the data considered varies not only in spatial dimensions, but also has a temporal dimension.
An effective filter then should consider both of these dimensions.
However, filtering on the simplicial complex will only take the spatial dimension into account, whereas filtering individually along each edge in time disregards the underlying topology and geometry of the problem.
To solve this issue we propose to construct a new complex that captures both the spatial and temporal dimensions of the problem.

The idea of signal processing for product spaces has already been considered for product graphs for data defined on the nodes \cite{Ortiz:2018, Sandryhaila:2014, Shi:2020,grassi2017time}.
However, processing of edge flow data is not possible within these product graph frameworks: while the product of two $0$-dimensional vertices is again a $(0+0)$-dimensional vertex and can be considered via standard graph products, the product of two ($1$-dimensional) edges is not an edge again but a $2$-dimensional face.
Hence, more sophisticated techniques and models are needed to model and process such data.

\subsection{Contributions and Outline}

We present a novel framework for signal processing on product spaces of simplicial complexes, which enables us to filter time-varying signals on SCs.
By interpreting the time axis as an SC (in particular, as a graph), we construct a cellular complex that represents the spatially distributed, time-varying signal.
It turns out that the Hodge Laplacian for the resulting product complex is a concise way to express a filter with the desired properties.

\Cref{section:background} gives an overview of the mathematical concepts of simplicial and cellular complexes, the Hodge Laplacian, and product complexes.
\Cref{section:motivation} elaborates on the motivation and the theory for temporal flow interpolation.
Specifically, \Cref{paragraph:HodgeLaplacian} describes the Hodge Laplacian of a product space.
Finally, we apply our method to real world ocean currents and show that interpolating along both time and space using the Hodge Laplacian of the product space outperforms interpolation solely along a single dimension.

\section{Background}

\label{section:background}
In this section, we present an elementary overview of concepts used to process signals defined on higher order networks such as simplicial or cell complexes.
For more details, \cite{Hatcher:2005, Bredon:1993} give a background in algebraic topology and \cite{Roddenberry:2021, Schaub:2021} in topological signal processing.

\subheading{Notation}
For some $n \in \mathbb{N}$, let $[n]$ denote the set of integers $\{ 1, \dots, n \}$.
Matrices and vectors are denoted by letters $\bA, \bx$.
We use $\cong$ to denote two isomorphic vector spaces, and $\oplus$ to denote the direct sum of vector spaces.
Further, $\otimes$ denotes the tensor product.

\subheading{Simplicial Complexes}
A simplicial complex \cite{Hatcher:2005,Bredon:1993} (SC) $X$ consists of a finite set of points $\mathcal{V}$, and a set of nonempty subsets of $\mathcal{V}$ that is closed under taking nontrivial subsets.
A \emph{$k$-simplex} $\cS^k$ is a subset of $\mathcal{V}$ with $k+1$ points.
Further, the \emph{faces} of $\cS^k$ are the subsets of $\cS^k$ with cardinality $k$, that is, exactly one element of the simplex is omitted.
If $\cS^{k-1}$ is a face of $\cS^k$, $\cS^k$ is called a \emph{coface} of $\cS^{k-1}$.
We denote the set of $k$-simplices and their number in $X$ by $X_k$ and $N_k$, respectively.
We can ground the understanding of SCs as an extension of the well known concept of graphs: $X_0$ is the set of vertices, $X_1$ is the set of edges, $X_2$ is the set of filled in triangles (not all 3-cliques must be filled in triangles), and so on.
The orientation of a $k$-simplex $\cS^k$ is represented by an ordering of its $k+1$ elements of $\mathcal{V}$. Two orientations are considered the same if they differ by an even permutation. Thus every simplex admits two possible orientations.

The structure of an SC can be encoded by \emph{boundary operators} $\bB_k$, which record the incidence relations between $(k-1)$-simplices and $k$-simplices according to the chosen reference orientations \cite{Hatcher:2005}.
Rows of $\bB_k$ are indexed by $(k-1)$-simplices and columns of $\bB_k$ are indexed by $k$-simplices.
Thus, the matrix $\bB_1$ is the vertex-to-edge incidence matrix and $\bB_2$ is the edge-to-triangle incidence matrix. 

\subheading{Simplicial Signals}
A $k$-simplicial signal on top of an SC assigns a real number to each oriented $k$-simplex.
We can think of this as a vector in $\bbR^{N_k}$.
For example, $\bbR^{N_0}$ is the \emph{node signal} space and $\bbR^{N_1}$ is the space of \emph{edge flows}. We denote the $k$-simplicial signal space on the SC $X$ by $\cC^k(X)$. Denote by $\cC^\bullet(X)$ the direct sum $\bigoplus_{k=0}^\infty \cC^k(X)$ of the $k$-simplicial signal spaces.
The sign of the signal indicates whether the signal is aligned with the reference orientation or not. More abstractly, a $k$-simplicial signal describes a skew-symmetric function on the set of oriented $k$-simplices. The $k$-th boundary operator $\bB_k$ can be interpreted as a map from $\bbR^{N_k}$ to $\bbR^{N_{k-1}}$. Furthermore, denote by $\del_X\colon \cC^\bullet(X)\to \cC^\bullet(X)$ the direct sum of the maps $\bB_k$.
Because of the way we defined $\bB_k$ and $\del_X$, applying $\del_X$ twice is trivial: $\del_X\circ\del_X=0$.

\subheading{Hodge Laplacian and Hodge Decomposition}
Using the sequence of boundary operators $\bB_k$, we can define the \emph{$k$-th Hodge Laplacian} on the $k$-simplicial signal space $\bbR^{N_k}$ as $\bL_k = \bB_k^\top \bB_k + \bB_{k+1} \bB_{k+1}^\top$ \cite{Horak:2013, Lim:2020}.
The standard graph Laplacian corresponds to $\bL_0 = \bB_1 \bB_1^\top$ ($\bB_0 = 0$ by convention).
The Hodge Laplacian gives rise to the \emph{Hodge decomposition} \cite{Lim:2020, Barbarossa:2020, Schaub:2021, roddenberry21a}:
\begin{align}
	\bbR^{N_k} & = \operatorname{Im}(\bB_{k+1}) \oplus \operatorname{Im}(\bB_k^\top) \oplus \operatorname{ker}(\bL_k).
\end{align}
We denote by $\Delta_X\colon \cC^\bullet(X)\to \cC^\bullet(X)$ the direct sum of the Hodge Laplacians $\bL_k$.

\subheading{Cell Complexes}
Intuitively, \emph{regular cell complexes} (CCs) generalize the notion of SCs \cite{Hatcher:2005,Bredon:1993}.
Rather than allowing only for simplices which are fixed-cardinality subsets of the vertex set as building blocks, we allow for the use of cells with more general shape.
A cell complex $X$ consists of cells of nonnegative dimension.
The \emph{$n$-skeleton} $X_n$ of $X$ consists of all cells in $X$ of dimension at most $n$.
Starting with $0$-cells, a $k$-cell $c_k$ is homeomorphic to the unit ball $B^k := \{\mathbf{x} \in \bbR^k : \|\mathbf{x}\|\le 1 \}$.
Its boundary $S^{k-1} := \{\mathbf{y} \in \bbR^k : \|\mathbf{y}\|= 1 \}$ is glued to $X_{k-1}$ along an attaching map $\alpha\colon S^{k-1} \to X_{k-1}$.
We will call the cells contained in the image of the attaching map $\alpha$ the boundary of $c_k$.
For example, a line is a $1$-cell with its two endpoints as $0$\nohyphen{}dimensional boundary.
Similarly, a polygon is a $2$-cell and its boundary consists of the line segments defining it. Here, the difference between a simplicial and a cell complex becomes clear: While a $2$-cell can be any arbitrary polygon, a $2$-simplex must be a triangle.

Analogous to SCs, we equip a cell complex with a reference orientation for bookkeeping purposes.
A cell inherits its orientation from its boundary.
That is, by orienting the boundary of a cell, we obtain a corresponding orientation of the cell itself.

A CC can again be encoded by boundary operators $\bB_k$. 
The rows of $\bB_k$ are indexed by $(k-1)$-cells and columns of $\bB_k$ are indexed by $k$-cells.
See \cite{Hatcher:2005} for more details. 
If we forget the topological data of a cell complex and only record the incidence and boundary relation, we arrive at the notion of an \emph{abstract cell complex}.
As in the case of SCs, we can associate a $k$-cellular signal space $\cC^k(X)$ to $X$.
We define $\del_X$ analogously to the simplicial complex case.

\section{Temporal Flow Interpolation}
\label{section:motivation}

In this section, we demonstrate how to use product spaces of SCs to process flows varying in time and space.
To that end, we assume that we are given a static simplicial complex and we observe a time-varying edge flow
(\textit{i.e.}, at each time an element of $\cC^1(X)$)
supported on this complex.

\begin{figure}
	\centering

	\resizebox{\linewidth}{!}{\begin{tikzpicture}[>=latex',line join=bevel,very thick, shading=warm]
	\def\xshift{5.5}
	\def\yshift{3.5}


	\node[anchor=center] at (-3.75,0) {\Large{$\mathbf{f},\widehat{\mathbf{f}}$}};
	\node[anchor=center] at (-3.75,-\yshift) {\Large{$\mathbf{f}^*$}};

	\foreach \t in {0,1,2} {
		\node[anchor=south] at (\t*\xshift,1.75) {\Large{$t=\t$}};

		\foreach \r in {0,1} {
			\node (\t0\r) at (\t*\xshift-2.4,-0.7-\r*\yshift) [draw, circle, fill=white] {0};
			\node (\t1\r) at (\t*\xshift-2.4,0.7-\r*\yshift) [draw, circle, fill=white] {1};
			\node (\t2\r) at (\t*\xshift-1,1-\r*\yshift) [draw, circle, fill=white] {2};
			\node (\t3\r) at (\t*\xshift,0-\r*\yshift) [draw, circle, fill=white] {3};
			\node (\t4\r) at (\t*\xshift-1,-1-\r*\yshift) [draw, circle, fill=white] {4};
			\node (\t5\r) at (\t*\xshift+1,1-\r*\yshift) [draw, circle, fill=white] {5};
			\node (\t7\r) at (\t*\xshift+1,-1-\r*\yshift) [draw, circle, fill=white] {7};
			\node (\t6\r) at (\t*\xshift+2,0-\r*\yshift) [draw, circle, fill=white] {6};
		}
	}

	
	\draw[dashed, gray, ->] (000) edge node [black,sloped,anchor=south] {2.25} (040);
	\draw[red!90!lightgray,->] (010) edge node[black,sloped,anchor=south] {2.25} (000);
	\draw[dashed, gray, ->] (020) edge node [black,sloped,anchor=south] {2.25} (010);
	\draw[dashed, gray, ->] (030) edge node [black,sloped,anchor=south] {2.25} (020);
	\draw[dashed, gray, ->] (040) edge node [black,sloped,anchor=south] {1.23} (030);
	\draw[dashed, gray, ->] (040) edge node [black,sloped,anchor=south] {0.92} (070);
	\draw[red!24!lightgray,->] (050) edge node [black,sloped,anchor=south] {0.60} (030);
	\draw[dashed, gray, ->] (060) edge node [black,sloped,anchor=south] {0.60} (050);
	\draw[dashed, gray, ->] (070) edge node [black,sloped,anchor=south] {0.32} (030);
	\draw[dashed, gray, ->] (070) edge node [black,sloped,anchor=south] {0.60} (060);

	\draw[dashed, gray, ->] (100) edge node [black,sloped,anchor=south] {1.96} (140);
	\draw[dashed, gray, ->] (110) edge node[black,sloped,anchor=south] {1.96} (100);
	\draw[dashed, gray, ->] (120) edge node [black,sloped,anchor=south] {1.96} (110);
	\draw[dashed, gray, ->] (130) edge node [black,sloped,anchor=south] {1.96} (120);
	\draw[dashed, gray, ->] (140) edge node [black,sloped,anchor=south] {1.08} (130);
	\draw[dashed, gray, ->] (140) edge node [black,sloped,anchor=south] {0.88} (170);
	\draw[red!28!lightgray,->] (150) edge node [black,sloped,anchor=south] {0.69} (130);
	\draw[dashed, gray, ->] (160) edge node [black,sloped,anchor=south] {0.69} (150);
	\draw[dashed, gray, ->] (170) edge node [black,sloped,anchor=south] {0.19} (130);
	\draw[dashed, gray, ->] (170) edge node [black,sloped,anchor=south] {0.69} (160);

	\draw[red!70!lightgray,->] (200) edge node [black,sloped,anchor=south] {1.76} (240);
	\draw[dashed, gray, ->] (210) edge node[black,sloped,anchor=south] {1.76} (200);
	\draw[dashed, gray, ->] (220) edge node [black,sloped,anchor=south] {1.76} (210);
	\draw[dashed, gray, ->] (230) edge node [black,sloped,anchor=south] {1.76} (220);
	\draw[red!37!lightgray,->] (240) edge node [black,sloped,anchor=south] {0.92} (230);
	\draw[dashed, gray, ->] (240) edge node [black,sloped,anchor=south] {0.85} (270);
	\draw[dashed, gray, ->] (250) edge node [black,sloped,anchor=south] {0.78} (230);
	\draw[dashed, gray, ->] (260) edge node [black,sloped,anchor=south] {0.78} (250);
	\draw[dashed,red!03!lightgray,->] (270) edge node [black,sloped,anchor=south] {0.07} (230);
	\draw[dashed, gray, ->] (270) edge node [black,sloped,anchor=south] {0.78} (260);

	\draw [red!82!lightgray,->] (001) edge node [black,sloped,anchor=south] {2.05} (041);
	\draw [red!82!lightgray,->] (011) edge node[black,sloped,anchor=south] {2.05} (001);
	\draw [red!82!lightgray,->] (021) edge node [black,sloped,anchor=south] {2.05} (011);
	\draw [red!82!lightgray,->] (031) edge node [black,sloped,anchor=south] {2.05} (021);
	\draw [red!46!lightgray,->] (041) edge node [black,sloped,anchor=south] {1.16} (031);
	\draw [red!36!lightgray,->] (041) edge node [black,sloped,anchor=south] {0.89} (071);
	\draw [red!25!lightgray,->] (051) edge node [black,sloped,anchor=south] {0.63} (031);
	\draw [red!25!lightgray,->] (061) edge node [black,sloped,anchor=south] {0.63} (051);
	\draw [red!10!lightgray,->] (071) edge node [black,sloped,anchor=south] {0.26} (031);
	\draw [red!25!lightgray,->] (071) edge node [black,sloped,anchor=south] {0.63} (061);

	\draw [red!77!lightgray,->] (101) edge node [black,sloped,anchor=south] {1.93} (141);
	\draw [red!77!lightgray,->] (111) edge node[black,sloped,anchor=south] {1.93} (101);
	\draw [red!77!lightgray,->] (121) edge node [black,sloped,anchor=south] {1.93} (111);
	\draw [red!77!lightgray,->] (131) edge node [black,sloped,anchor=south] {1.93} (121);
	\draw [red!42!lightgray,->] (141) edge node [black,sloped,anchor=south] {1.06} (131);
	\draw [red!35!lightgray,->] (141) edge node [black,sloped,anchor=south] {0.87} (171);
	\draw [red!27!lightgray,->] (151) edge node [black,sloped,anchor=south] {0.68} (131);
	\draw [red!27!lightgray,->] (161) edge node [black,sloped,anchor=south] {0.68} (151);
	\draw [red!08!lightgray,->] (171) edge node [black,sloped,anchor=south] {0.19} (131);
	\draw [red!27!lightgray,->] (171) edge node [black,sloped,anchor=south] {0.68} (161);

	\draw [red!72!lightgray,->] (201) edge node [black,sloped,anchor=south] {1.81} (241);
	\draw [red!72!lightgray,->] (211) edge node[black,sloped,anchor=south] {1.81} (201);
	\draw [red!72!lightgray,->] (221) edge node [black,sloped,anchor=south] {1.81} (211);
	\draw [red!72!lightgray,->] (231) edge node [black,sloped,anchor=south] {1.81} (221);
	\draw [red!39!lightgray,->] (241) edge node [black,sloped,anchor=south] {0.97} (231);
	\draw [red!33!lightgray,->] (241) edge node [black,sloped,anchor=south] {0.83} (271);
	\draw [red!28!lightgray,->] (251) edge node [black,sloped,anchor=south] {0.69} (231);
	\draw [red!28!lightgray,->] (261) edge node [black,sloped,anchor=south] {0.69} (251);
	\draw [red!06!lightgray,->] (271) edge node [black,sloped,anchor=south] {0.15} (231);
	\draw [red!28!lightgray,->] (271) edge node [black,sloped,anchor=south] {0.69} (261);

	\foreach \t in {0,1,2} {
		\foreach \r in {0,1} {
			\begin{pgfonlayer}{background}
				\fill[fill=lightgray, fill opacity=0.6] (\t3\r.center) to (\t4\r.center) to (\t7\r.center);
			\end{pgfonlayer}
		}
	}

\end{tikzpicture}} \\
	\vspace{10pt}

    \scriptsize
	\begin{NiceTabular}{lrrr}
		\toprule
		$\alpha_t$ & $0.0$ & $0.01$ & $1.0$ \\
		$\alpha_s$ & $1.0$ & $1.0$ & $0.0$ \\
		\midrule
		$\nicefrac{\|\mathbf{f}^*-\mathbf{f}\|_2}{\|\mathbf{f}\|_2}$ & $0.505$ & $\mathbf{0.042}$ & $0.716$ \\
		\bottomrule
	\end{NiceTabular}	  

	\caption{
		\textbf{Illustration: Importance of considering both spatial and temporal components in sample-limited interpolation.}
		The first row illustrates the true flow $\mathbf{f}$ as it varies over time ($t \in \{1,2,3\}$).
		The solid edges indicate the observed flows in each snapshot while dashed edges are unobserved and their flow value should be interpolated.
		We fix $\lambda=10^{-6}$.
		The second row shows $\alpha_t/\alpha_s=0.01$.
	}
    \vspace{-0.5cm}
	\label{figure:demo}
\end{figure}
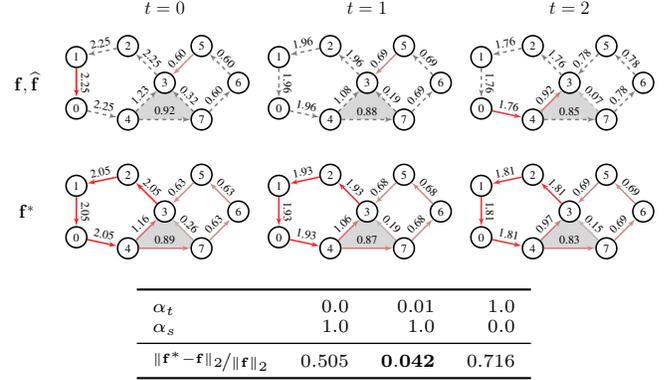

    \vspace{-0.1cm}
\subsection{Example: Interpolation in Time and Space}
    \vspace{-0.1cm}

Let $X=(\mathcal{V},\mathcal{E},\mathcal{T})$ be an SC consisting of vertices, edges, and filled in triangles, and assume that the set of edges has been endowed with an orientation.
Suppose for some time-varying flow $f\colon [T]\times\mathcal{E}\to\bbR$, we observe flows on a subset of the edges within our spatial domain described by $X$.
Denote the collection of observed edges by $\Omega\subseteq [T]\times\mathcal{E}$.
For such a flow, write the vectorized form as $\mathbf{f}$, indexed so that $\mathbf{f}_t^e=f(t,e)$ for some time index $t\in[T]$ and oriented edge $e\in\mathcal{E}$.
For convenience, we write the flow of an edge $e$ at all time steps as $\mathbf{f}^e$, and the flow on the SC at a single time $t$ as $\mathbf{f}_t$.

To interpolate such a partially observed flow, we proceed by setting up the following optimization problem akin to~\cite{Schaub:2018,Jia:2019}, which enforces the flows to vary smoothly (jointly) across space and time:
\begin{equation}\footnotesize
	\label{eq:interpolation-program}
	\begin{aligned}
		\mathbf{f}^* = \argmin_{\mathbf{f}} \bigg(
			& \operatorname{MSE}(\mathbf{f}\big|_\Omega, \widehat{\mathbf{f}}\big|_\Omega) + \alpha_s\sum_{t=0}^T \mathbf{f}_t^\top\bL_s\mathbf{f}_t \\
			& + \alpha_t\sum_{e\in\mathcal{E}} \mathbf{f}^{e\top}\bL_t\mathbf{f}^e + \lambda\mathbf{f}^\top\mathbf{f}
		\bigg).
	\end{aligned}
\end{equation}
Here $\widehat{\mathbf{f}}\big|_\Omega$ denotes the partially observed flow restricted to the observed domain $\Omega$ and $\operatorname{MSE}$ denotes the mean squared error.
Further, $\bL_s$  is the spatial Hodge Laplacian obtained by considering the SC $X$ for each time index $t$ independently. 
Moreover, the temporal Hodge Laplacian $\bL_t$ is here simply the graph Laplacian of the path graph of length $T$, obtained by comparing each edge flow with the flow on the same edge at $t'=t\pm1$.
In this setting, ``smooth'' means ``having a small quadratic form with respect to the Hodge Laplacian.''
The scalars $\alpha_s$ and $\alpha_t$ are regularization parameters, allowing us to enforce varying degrees of temporal smoothness relative to spatial smoothness. 
The ridge penalty $\lambda$ is assumed to be small and is included only to force a unique solution.

As a concrete example, let us assume we observe only part of the edge flow at $T = 3$ different snapshots and want to infer an approximation of the remaining edge flows for all snapshots, as illustrated in~\cref{figure:demo}.
The true (only partially observed) flow is a smooth flow on the edges of an SC and changes gradually over time ($t\in\{1,2,3\}$).
Our observation set now consists of the flow over $2$ edges at $t=1$, $1$ edge at $t=2$, and $2$ edges at $t=3$.

One can clearly see that many edges of the graph are never observed, rendering purely temporal smoothing unsuitable.
Moreover, for any snapshot, the observed edges do not cover a sufficient subset of the graph, so that purely spatial interpolation at each point in time under the assumption of smooth flows is insufficient as well.
This justifies the importance of interpolating in a way that requires joint smoothness in both the temporal and spatial domain.
Indeed, without such smoothness assumptions, the interpolation problem is impossible, since there are $30$ degrees of freedom, but only $5$ known variables.
To test this, we vary the regularization parameters $\alpha_s,\alpha_t$ to explore the relationship between spatial and temporal smoothness, with results shown in the table of \cref{figure:demo}.
Observe that the best interpolation performance is attained when both $\alpha_s$ and $\alpha_t$ are nonzero, with poor performance when either one is zero.

    \vspace{-0.1cm}
\subsection{The Product Space Framework}
    \vspace{-0.1cm}
In this subsection we will show how the above filtering procedure emerges naturally from considering the Hodge Laplacian of the product space.
This not only provides a theoretical justification for considering the above optimization problem, but provides also the means to extend these ideas to signals supported on arbitrary products of complexes.

\subheading{Product Complexes}
Let $X$ and $Y$ be SCs.
Observe that their Cartesian product $Z:=X\times Y$ \emph{cannot} easily be given the structure of a SC.
Instead, we can view $Z$ as a cell complex:
The elements of $Z$ are tuples of the form $(\sigma_X, \sigma_Y)$, where $\sigma_X\in X$ and $\sigma_Y\in Y$.
If $\sigma_X$ is a $k$-simplex, and $\sigma_Y$ is an $\ell$-simplex, then we can view $(\sigma_X,\sigma_Y)$ as a $(k+\ell)$-cell in $Z$.
For instance, a tuple consisting of two vertices yields another vertex, and a tuple consisting of a vertex and an edge yields an edge.
However, a tuple consisting of two edges does not yield a $2$-simplex (triangle), but a $2$-cell resembling a ``filled in rectangle,'' as pictured in~\cref{fig:product}.
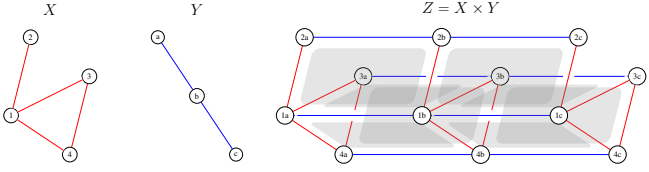
\begin{figure}
	\centering
	\resizebox{\linewidth}{!}{\begin{tikzpicture}[>=latex',line join=bevel,every node/.style={scale=0.5},scale=0.75,shading=coolwarm]
    \def\xshift{3.5}
    
    \tikzset{
    ultra thick/.style={line width=6pt}
    }

    \node[anchor=south] (X) at (-2*\xshift+1,2.5) {\LARGE$X$};
    \node[anchor=south] (Y) at (-\xshift+1.25,2.5) {\LARGE$Y$};
    \node[anchor=south] (Z) at (\xshift+1,2.5) {\LARGE$Z=X\times Y$};

    \node[fill=white] (01) at (-2*\xshift,0) [draw,circle] {1};
    \node[fill=white] (02) at (-2*\xshift+0.5,2) [draw,circle] {2};
    \node[fill=white] (03) at (-2*\xshift+2,1) [draw,circle] {3};
    \node[fill=white] (04) at (-2*\xshift+1.5,-1) [draw,circle] {4};

    \foreach \t/\label in {1/a,2/b,3/c} {
        \begin{pgfonlayer}{ffg}
            \node[fill=white] (\t1) at (\t*\xshift-\xshift,0) [draw,circle] {1\label};
        \end{pgfonlayer}
        
        \begin{pgfonlayer}{fg}
            \node[fill=white] (\t2) at (\t*\xshift-\xshift+0.5,2) [draw,circle] {2\label};
            \node[fill=white] (\t4) at (\t*\xshift-\xshift+1.5,-1) [draw,circle] {4\label};
        \end{pgfonlayer}
        
        \begin{pgfonlayer}{background}
            \node[fill=white] (\t3) at (\t*\xshift-\xshift+2,1) [draw,circle] {3\label};
        \end{pgfonlayer}
    }
    
    \node[fill=white] (p1) at (-\xshift+0.25,2) [draw,circle] {a};
    \node[fill=white] (p2) at (-\xshift+1.25,0.5) [draw,circle] {b};
    \node[fill=white] (p3) at (-\xshift+2.25,-1) [draw,circle] {c};
    
    \draw [blue] (p1) -- (p2) -- (p3);

    \foreach \n/\layer in {1/main,2/main,3/background,4/main} {
        \foreach \t/\nextt in {1/2,2/3} {
            \begin{pgfonlayer}{\layer}
                \draw [white, ultra thick] (\t\n) edge (\nextt\n);
                \draw [blue] (\t\n) edge (\nextt\n);
            \end{pgfonlayer}
        }
    }
    
    \foreach \t in {0,1,2,3} {
        \foreach \ni/\nt/\layer in {1/2/main,1/3/main,1/4/main,3/4/background} {
            \begin{pgfonlayer}{\layer}
                \draw [white, ultra thick] (\t\ni) edge (\t\nt);
                \draw [red] (\t\ni) edge (\t\nt);
            \end{pgfonlayer}
        }
    }
    
    \foreach \t/\nextt in {1/2,2/3} {
        \foreach \ni/\nt/\offset in {1/2/8pt,1/3/6pt,4/1/5pt,4/3/8pt} {
            \def\prev{\t\ni}
            \def\this{\nextt\ni}
            
            \foreach \next in {\nextt\nt,\t\nt,\t\ni,\nextt\ni} {
                \path[name path=first] ($(\prev)!\offset!90:(\this)$) --
                    ($(\this)!\offset!-90:(\prev)$);
                    
                \path[name path=second] ($(\this)!\offset!90:(\next)$) --
                    ($(\next)!\offset!-90:(\this)$);
                    
                \path[name intersections={of=first and second}] coordinate
                    (\prev-\this-\next) at (intersection-1);
                    
                \global\let\prev=\this
                \global\let\this=\next
            }
            
            \begin{pgfonlayer}{fg}
                \fill [fill=gray, fill opacity=0.2, rounded corners]
                    (\nextt\nt-\t\nt-\t\ni.center) -- (\t\nt-\t\ni-\nextt\ni.center) -- 
                    (\t\ni-\nextt\ni-\nextt\nt.center) -- (\nextt\ni-\nextt\nt-\t\nt.center) -- 
                    cycle;
            \end{pgfonlayer}
        }
    }
\end{tikzpicture}}
	\caption{
		\textbf{Cell complex yielded by the Cartesian product of two simplicial complexes.}
		For ease of illustration, we consider two graphs $X,Y$ as the component spaces.
		Taking their Cartesian product, we obtain a cell complex $Z$.
		Using our notation, $\cC^{0,0}(Z)$ is supported on the vertices of $Z$, $\cC^{1,0}(Z)$ is supported on the red edges of $Z$, $\cC^{0,1}(Z)$ is supported on the blue edges of $Z$, and $\cC^{1,1}(Z)$ is supported on the gray faces of $Z$.
	}
    \vspace{-0.5cm}
	\label{fig:product}
\end{figure}

The faces of a $k$-cell $\sigma_Z=(\sigma_X,\sigma_Y)$ are the faces of the form $(\sigma'_X,\sigma_Y)$ or $(\sigma_X,\sigma'_Y)$, where $\sigma'_X$ and $\sigma'_Y$ are faces of $\sigma_X$ or $\sigma_Y$ respectively.
%
%
One can easily see that the product $e\times f$ of two edges  $e=[v_1,v_2]$ and $f=[u_1,u_2]$ has four faces corresponding to the cells $(v_1,f)$, $(v_2, f)$, $(e,u_{1})$, and $(e, u_2)$. 
However, a $2$-\emph{simplex} is only allowed to have $3$ faces. 
To maintain a simplicial structure, we would thus need to subdivide this cell into two triangles.

\subheading{Signal Space of a Product Complex}
Recall that there is a bijective correspondence between $k$-cells of $Z$ and pairs $(\sigma_X^i,\sigma_Y^j)$ of $i$- and $j$-cells of $X$ and $Y$ with $i+j=k$. Thus, we can decompose the $k$-cellular signal space of $Z$ into a direct sum of tensor products of the $i$- and $j$-simplicial signal spaces of $X$ and $Y$.
\begin{equation}
	\cC^k(Z)\cong\bigoplus_{i+j=k}\cC^i(X)\otimes\cC^j(Y).
\end{equation}
We will denote the direct summands by $\cC^{i,j}(Z):=\cC^i(X)\otimes\cC^j(Y)$.
%
A correct definition of the boundary map of $Z$ can be extracted from the boundary maps $\del_X$ and $\del_Y$ as
\begin{equation}
	\begin{aligned}
		\del_Z^{i,j} & \colon\cC^{i,j}(Z)\to \cC^{i-1,j}(Z)\oplus \cC^{i,j-1}(Z),\\
		\del_Z^{i,j} & =\del_X\otimes \id_Y+(-1)^{i}\id_X\otimes \del_Y.
	\end{aligned}
\end{equation}
This implicitly equips the cells of $Z$ with an orientation inherited from the orientations of the simplices in $X$ and $Y$.
The multiplication with $(-1)^i$ is necessary for $\del_Z\circ\del_Z=0$ to hold. 


\subheading{Hodge Laplacian of a Product Space}
With these definitions in place, we can compute the Hodge Laplacian $\Delta_Z$ of the product space $Z:={X}\times {Y}$ as follows.
Recall that we use $X$ to denote the spatial simplicial complex, whereas the temporal simplicial complex $Y$ consists of a path graph with a vertex for each time step $t$ and an edge between each pair of successive time steps $t$ and $t+1$.
For brevity, we will write $\del_{X}$ instead of $\del_{X}\otimes \id_{Y}$ and similarly for $\del_{Y}$, $\Delta_X$, and $\Delta_Y$.
Denoting the restriction of the Hodge Laplacian $\Delta_Z$ to $\cC^{i,j}(Z)$ by $\Delta_Z^{i,j}$ and using the commutativity of $\del_{X}$ with $\del_{Y}^\top$, and $\del_{X}^\top$ with $\del_{Y}$, we obtain the following result.
\begin{prop}
\label{paragraph:HodgeLaplacian}
The Hodge Laplacian $\Delta_Z$ restricted to $\cC^{i,j}(Z)$ is given by
\begin{align}
	\label{eq:productHodge}
	\Delta_Z^{i,j} =\del_Z^{\phantom{\top}}\del_Z^\top+\del_Z^\top\del_Z^{\phantom{\top}}
	=\Delta_{X}+\Delta_{Y},
\end{align}
where $\Delta_{X}$ and $\Delta_{Y}$ are the Hodge Laplacians of the spatial and temporal domain, constructed from $\del_{X}$ and $\del_{Y}$, respectively.
\end{prop}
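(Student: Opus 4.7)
The plan is to substitute the explicit formula $\del_Z^{i,j} = \del_X \otimes \id_Y + (-1)^i \id_X \otimes \del_Y$ directly into $\Delta_Z = \del_Z \del_Z^\top + \del_Z^\top \del_Z$ and expand, treating the bigraded components separately. The essential content is that the cross terms between $\del_X$ and $\del_Y$ cancel thanks to the sign $(-1)^i$, leaving only the desired sum of the spatial and temporal Hodge Laplacians.

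First, I would unwind the adjoint: by taking componentwise adjoints in the definition of $\del_Z$, the coboundary $\del_Z^\top$ sends $\cC^{i,j}(Z)$ into $\cC^{i+1,j}(Z) \oplus \cC^{i,j+1}(Z)$ with components $\del_X^\top \otimes \id$ and $(-1)^i \id \otimes \del_Y^\top$, respectively. Here the sign in the second component is inherited from the source bidegree of the corresponding occurrence of $\del_Z$, namely $\del_Z^{i,j+1}$, and is what really needs to be tracked carefully throughout.

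Next, I would compute $\del_Z^\top \del_Z$ and $\del_Z \del_Z^\top$ on $\cC^{i,j}(Z)$ in turn, sorting the resulting terms by destination bigrade. Each composition factors through two intermediate summands and yields a diagonal contribution landing back in $\cC^{i,j}(Z)$ plus off-diagonal contributions landing in $\cC^{i-1,j+1}(Z)$ and $\cC^{i+1,j-1}(Z)$. The diagonal pieces sum to
\begin{equation*}
(\del_X^\top \del_X + \del_X \del_X^\top) \otimes \id_Y + \id_X \otimes (\del_Y^\top \del_Y + \del_Y \del_Y^\top) = \Delta_X + \Delta_Y,
\end{equation*}
using the commutation $(A \otimes \id)(\id \otimes B) = (\id \otimes B)(A \otimes \id)$ to arrange the factors. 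The off-diagonal pieces take the form $c \, \del_X \otimes \del_Y^\top$ on $\cC^{i-1,j+1}(Z)$ and $c' \, \del_X^\top \otimes \del_Y$ on $\cC^{i+1,j-1}(Z)$, where the scalar coefficients collected from $\del_Z^\top \del_Z$ and $\del_Z \del_Z^\top$ turn out to be $(-1)^{i-1} + (-1)^i$ and $(-1)^i + (-1)^{i+1}$ respectively. Both sums vanish, so the off-diagonal contributions cancel. This is the same sign cancellation that forces $\del_Z \circ \del_Z = 0$, already noted after the definition of $\del_Z$.

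The main obstacle is disciplined sign bookkeeping: the sign $(-1)^i$ in the definition depends on the source bidegree of each occurrence of $\del_Z$, so for instance a composition like $\del_Z \circ \del_Z^\top$ on $\cC^{i,j}(Z)$ involves $\del_Z$ emanating from $\cC^{i+1,j}(Z)$, contributing a sign $(-1)^{i+1}$ rather than $(-1)^i$. Once this sign convention is applied consistently to each of the four bigraded pieces of the two compositions, the remaining manipulation is routine and yields the claimed identity $\Delta_Z^{i,j} = \Delta_X + \Delta_Y$.
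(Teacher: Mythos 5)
Your proposal is correct and follows essentially the same route as the paper's proof: substitute $\del_Z = \del_X\otimes\id_Y + (-1)^i\,\id_X\otimes\del_Y$ into $\del_Z^{\phantom{\top}}\del_Z^\top+\del_Z^\top\del_Z^{\phantom{\top}}$, expand, and observe that the cross terms cancel in pairs by the alternating sign together with the commutativity of $\del_X$ with $\del_Y^\top$. The paper performs this as a single eight-term expansion with the cancellation left implicit, whereas you make explicit the sorting by destination bigrade and the vanishing coefficients $(-1)^{i-1}+(-1)^i$ and $(-1)^i+(-1)^{i+1}$ — a more careful write-up of the same argument.
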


\begin{proof}
	\begin{align*}
		\Delta_Z^{i,j} & = \del_Z^{\phantom{\top}}\del_Z^\top+\del_Z^\top\del_Z^{\phantom{\top}} \\
		& = \del_{X}^{\phantom{\top}}\del_{X}^\top+(-1)^i\del_{X}^{\phantom{\top}}\del_{Y}^\top+(-1)^{i+1}\del_{Y}^{\phantom{\top}}\del_{X}^\top+\del_{Y}^{\phantom{\top}}\del_{Y}^\top\nonumber \\
		& \quad + \del_{X}^\top\del_{X}^{\phantom{\top}}+(-1)^i\del_{X}^\top\del_{Y}^{\phantom{\top}}+(-1)^{i-1}\del_{Y}^\top\del_{X}^{\phantom{\top}}+\del_{Y}^\top\del_{Y}^{\phantom{\top}}\nonumber \\
		& = \Delta_{X}+\Delta_{Y} \qedhere
	\end{align*}
\end{proof}

Because the above claim holds for arbitrary $i$ and $j$, we see that the Hodge Laplacian acting on $\cC^k(Z)$ simply sums the Hodge Laplacians $\Delta_{X}$ and $\Delta_{Y}$ acting on each constituent vector space of $\cC^k(Z)$.
Thus, it is reasonable to write the Hodge Laplacian on $Z$ as being made up of restricted Laplacians of the form $\Delta_Z^{i,j}\colon\cC^{i,j}(Z)\to\cC^{i,j}(Z)$.
This can be interpreted as an advantage of working with the product structure on a cell complex.
When we are interested in evaluating the Hodge Laplacian only on the $(k+l)$-cells of $Z$ which arose as products of $k$-simplices in ${X}$ and $l$-simplices in ${Y}$, the signal values in $\cC^{k,l}(Z)$ on the $(k+l)$-cells suffice as input.
In contrast, in order to compute the Hodge Laplacian of an arbitrary subset of $n$-cells of $Z$, one needs the entire $(k+l)$-th signal space $\bigoplus_{i+j=k+l}\cC^{i,j}(Z)$ of $Z$ as input.
This simplification yields a high degree of separability when considering the vector spaces $\cC^{i,j}(Z)$, leading to a finer Hodge decomposition.

\subheading{Filtering with the Product Space Hodge Laplacian}
From our above discussion, we can conclude that the interpolation \cref{eq:interpolation-program} for the spatiotemporal flows is simply governed by the Hodge Laplacian of the product space in disguise.
To see this, let again $X$ denote the spatial simplicial complex and $Y$ the temporal simplicial complex.
Furthermore, denote their product cell complex by $Z:=X\times Y$ with signal space $\cC^\bullet(Z)$.
Then the edge flow corresponds to the component $\cC^{1,0}(Z)$ of the signal space.
Using \eqref{eq:productHodge}, we can split the Hodge Laplacian into $\Delta_{X}$ and $\Delta_{Y}$.
Then $\Delta_{X}$ is the spatial Hodge Laplacian $\mathbf{L}_s$ from \eqref{eq:interpolation-program} and $\Delta_{Y}$ the temporal Hodge Laplacian $\mathbf{L}_t$.
We can weight the Hodge Laplacian of $X$ by $\alpha_s$ to represent the spatial smoothness,
and similarly, we can weight the Hodge Laplacian of $Y$ by $\alpha_t$ to represent the temporal smoothness of the signal, so that $\Delta_{Z}^{1,0}=\alpha_s\Delta_{X}^{1}+\alpha_t\Delta_{Y}^{0}$.
This weighting of the Hodge Laplacian yields a quadratic form for temporal flows sufficient to express the interpolation program~\eqref{eq:interpolation-program}, for example.

\section{Real World Example: Ocean Drifters}

In this section, we consider a dataset from the Global Drifter Program, which we restricted to the Caribbean\footnote{Data available at \url{https://www.aoml.noaa.gov/phod/gdp/data.php}. We only consider entries in the rectangle spanned by \ang{25}N\ \ang{90}W (top left) and \ang{10}N\ \ang{55}W (bottom right). We ignore years before 1992 since there is no prior data in this area.}.
The dataset consists of \num{1378} buoys floating in the ocean and the location of each buoy is logged every $6$ hours, resulting in \num{532696} location pings from the years 1992 to 2020.

The spatial simplicial complex is constructed following the setup considered in \cite{Schaub:2020}:
We place a hexagonal grid on the earth's surface, with the size of each hexagon corresponding to \ang{0.3} (latitude).
We associate a vertex with each hexagon and connect two vertices by an edge if their hexagons have a common face.
Every triplet of hexagons that meet at a common point form a filled in triangle.
Hexagons that cover landmasses are removed, which leads to some ``holes'' in the simplicial complex.
The temporal domain corresponds to a path graph of length $29$.

We discretize the observed trajectories according to our simplicial complex.
A trajectory is represented by a vector $\mathbf{f}$ with entry $f_{[i,j]} = 1$ if the edge $[i,j]$ is traversed in the reference orientation, $f_{[i,j]} = -1$ if the edge is traversed in the opposite orientation, and $0$ otherwise.
The flow vector $\mathbf{f}_t$ of a particular year $t = 1, \dots, T$ is the sum of the trajectory vectors of that year, where we split buoy trajectories that span multiple years into separate trajectories.

Finally, we divide the drifters into $80\%$ training and $20\%$ test trajectories and denote with $\smash{\widehat{\mathbf{f}}^\text{tr}}$ and $\smash{\widehat{\mathbf{f}}^\text{tst}}$ the training and test flows, respectively.
We use the training trajectories to estimate the yearly ocean currents in the Caribbean and then check if these currents are compatible with the trajectories of the test drifters, \textit{e.g.}, if the buoy follows the ocean currents or moves perpendicular to them.

More specifically, from $\smash{\widehat{\mathbf{f}}^\text{tr}}$ we want to infer some suitable ocean currents $\mathbf{f}^*$ that describe the given trajectories best.
Denote by $ \mathbf{f}_{\widehat{ \mathbf{f} } }$ the restriction of $\mathbf{f}$ to $\mathrm{supp}\left(\smash{\widehat{\mathbf{f}}}\right)$.
Then this corresponds to minimizing the cosine similarity between $\smash{\widehat{\mathbf{f}}^\text{tr}}$ and $\smash{\mathbf{f}_{\widehat{\mathbf{f}}}}$, \textit{i.e.}, we get the loss function
\begin{equation}
	\mathcal{L}\left( \mathbf{f}, \widehat{\mathbf{f}} \right)
	= \frac{1}{2} \left(1 - \frac{\langle\mathbf{f}, \widehat{\mathbf{f}}\rangle}{\| \mathbf{f}_{\widehat{\mathbf{f}}} \|_2 \cdot \| \widehat{\mathbf{f}} \|_2}\right).
\end{equation}
Notice that we normalize the cosine similarity such that $0$ corresponds to perfect alignment and $1$ to opposite flows.

We further want to infer the flow for hexagons without any training data, \textit{i.e.}, we want to use the resulting flow to infer the test trajectories.
For that we add smoothness assumptions both in space and in time to the objective, \textit{i.e.}, we penalize if a hexagon's flow differs significantly from its neighbors' flows and if a hexagon's flow varies largely over time.
The resulting convex optimization problem is as follows:
\begin{equation}
    \scriptsize
	\mathbf{f}^* = \argmin_{\mathbf{f}} \left(
		\mathcal{L}\left( \mathbf{f},  \widehat{\mathbf{f}}^\text{tr} \right)
		+ \frac{
			\frac{\alpha_s}{T} \sum_t \mathbf{f}_t^\top\bL_s\mathbf{f}_t
			+ \frac{\alpha_t}{|\mathcal{E}|} \sum_e \mathbf{f}^{e\top}\bL_t\mathbf{f}^e
		}{\| \mathbf{f} \|_2^2}
	\right).
\end{equation}
This amounts to minimizing $
\mathcal{L}(\mathbf{f},\widehat{\mathbf{f}}^\text{tr}) +
\langle\mathbf{f},\Delta(\alpha_s,\alpha_t)\mathbf{f}\rangle/\|\mathbf{f}\|_2^2$, where $\Delta (\alpha_s,\alpha_t)=\alpha_s\Delta_X/T+\alpha_t\Delta_Y/|\mathcal{E}|$ denotes the weighted Hodge Laplacian.
We solve this optimization problem for hyperparameters $\alpha_t, \alpha_s \in \{0\} \cup \{10^i : i = 0, \dots, -5 \}$ and report the losses in the table of \cref{fig:ocean-drifter-losses}.
Almost all hyperparameters attain low loss on the training data, especially with those that give higher weight to spatial smoothing.
For the test data, we observe that $\alpha_t = 1.0$ and $\alpha_s =\smash{10^{-3}}$ perform best.
That is, enforcing more temporal smoothing than spatial smoothing is beneficial to infer unseen buoy trajectories.
We assess that this is the case because ocean currents have fewer variations over the years as they have spatially.

\begin{figure}
	\centering

	\includegraphics[width=0.9\linewidth]{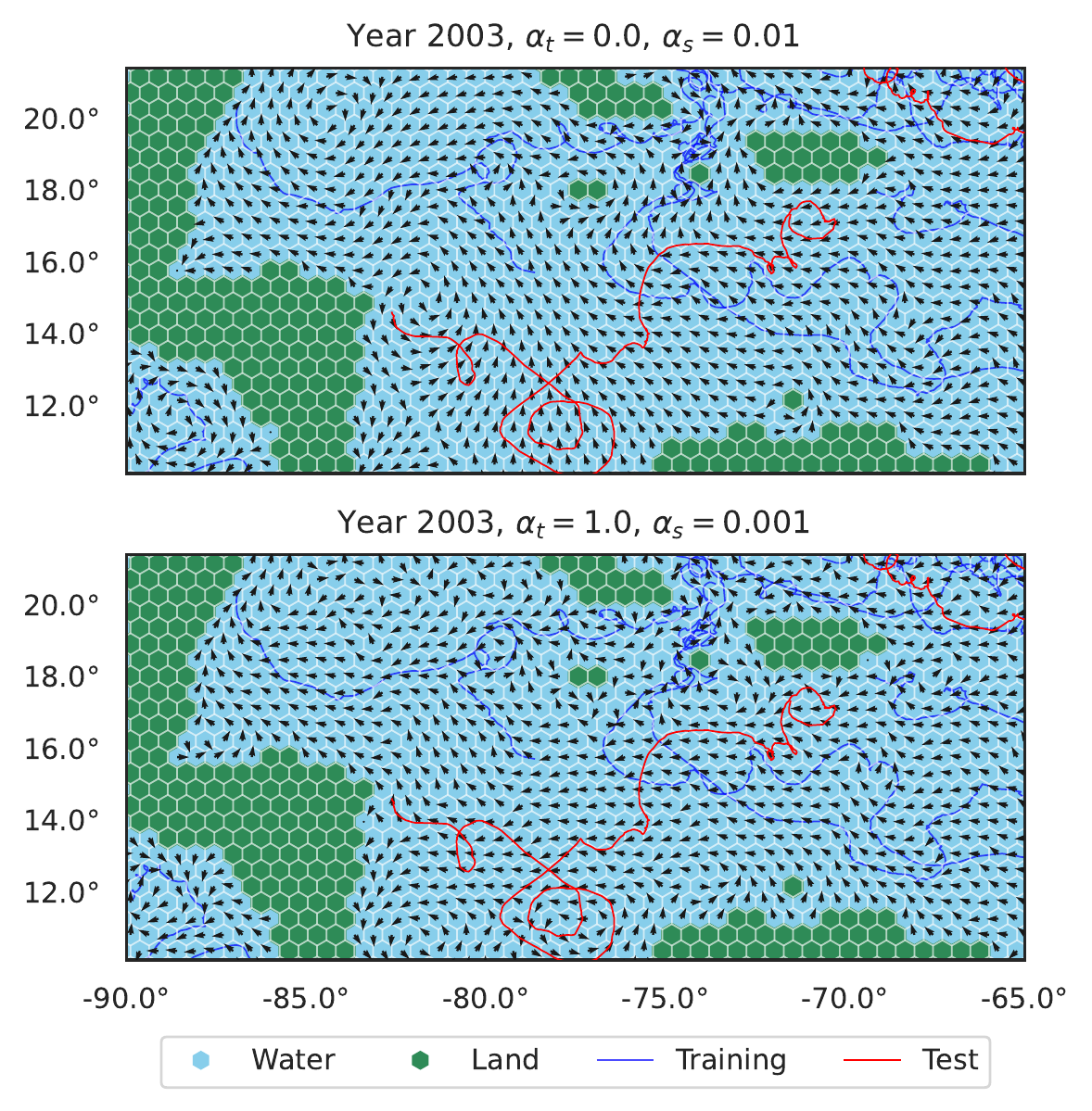}

    \scriptsize
	\begin{tabular}{lrrrr}
          \toprule
		  $\alpha_t$ & $0.0$ & $1.0$ & $1.0$ & $0.0$ \\
		  $\alpha_s$ & $0.0$ & $0.0$ & $\num{e-3}$ & $\num{e-2}$ \\
          \midrule
          $\mathcal{L}(\mathbf{f}^*, \widehat{\mathbf{f}}^\text{tr})$ & $\mathbf{0.000}$ & $\mathbf{0.000}$ & $\mathbf{0.000}$ & $\mathbf{0.000}$ \\
		  $\mathcal{L}(\mathbf{f}^*, \widehat{\mathbf{f}}^\text{tst})$ & $0.316$ & $0.290$ & $\mathbf{0.264}$ & $0.303$ \\
          \bottomrule
	\end{tabular}

	\caption{
		\textbf{Quality of the inferred ocean currents with different smoothness hyperparameters.}
		Shown are the best cases without smoothness assumptions (first column), with pure temporal and pure spatial smoothness constraints (second and last column) and a combination of both (third).
		A loss of $0$ corresponds to perfect alignment while $1$ indicates opposite flows.
	}
    \vspace{-0.5cm}
	\label{fig:ocean-drifter-losses}
\end{figure}


\section{Discussion}

We have shown that the Hodge Laplacian of product cell complexes provides a concise and effective framework for filtering spatiotemporal data.
We showed the effectiveness of this product space framework by extrapolating ocean flows in the Caribbean from a limited set of data points varying in time and space.
Further research could be made using more general product spaces. For instance, one could model a time complex where each time step is connected to the next time step and the time one year later.
This approach could harvest periodicities on multiple scales in the data.
This is connected to using a fiber bundle structure on the time space.


\bibliographystyle{IEEEbib}
\bibliography{IEEEabrv,ref}

\end{document}